\newtheorem{proposition}{Proposition}[section]
\newtheorem{definition}{Definition}[section]
\newtheorem{theorem}{Theorem}[section]
\newtheorem{lemma}{Lemma}[section]
\newtheorem{remark}{Remark}[section]
\title[Asymptotically Arbitrage-free Approximation]{{\Large O}n {\Large A}symptotically {\Large A}rbitrage-free
{\Large A}pproximations of the {\Large I}mplied
{\Large V}olatility}
\author[M. Fukasawa]{{\large M}asaaki {\large F}ukasawa}
\address{Graduate School of Engineering Science, Osaka University,
560-8531, Japan}
\begin{document}
\maketitle
\begin{abstract}
Following-up Fukasawa and Gatheral (Frontiers of Mathematical Finance, 2022),
we prove that the BBF formula, the SABR
 formula, and the rough SABR formula
provide asymptotically arbitrage-free approximations of the
 implied volatility under, respectively, the local volatility model, the
 SABR model, and the rough SABR model.
\end{abstract}
\section{Introduction}
The implied volatility is one of the basic quantities in financial
practice. The option market prices are translated to the implied
volatilities to normalize in a sense their dependence on strike price
and maturity. The shape of the implied volatility surface characterizes
the marginal distributions of the underlying asset price process.
Other than a flat surface corresponding to the Black-Scholes dynamics, 
no exact formula of the surface is available and so, 
various approximation formulae have been investigated.
See \cite{Gatheral} for a practical guide for the volatility surface.

One of the most famous and in daily use of financial practice  is the
SABR formula proposed by Hagan et al.~\cite{HaganEtAl} for the SABR model.
After its original derivation by \cite{HaganEtAl} based on a formal
perturbation expansion, and a verification by \cite{BBF2} based on an
 asymptotic analysis of PDE,
Balland~\cite{Balland} derived the formula (for the so-called lognormal
case)
by an elegant no-arbitrage argument.
The no-arbitrage argument remains valid for non-Markovian models,
and Fukasawa and Gatheral~\cite{FG} derived an extension of the SABR
formula to a rough SABR model, where the volatility process is non-Markovian.

It has been known that the implied volatility surface of an equity
option market typically exhibits a power-law type term structure. Since
classical local-stochastic volatility models including the SABR model
are not consistent to such a term structure, 
the so-called rough volatility model has recently attracted attention,
which is the only class of continuous price models that are consistent
to the power-law;  see Fukasawa~\cite{F17, F21}.
The rough SABR model of \cite{FG} (see also \cite{F18, Musiela, FHT}) 
is a rough volatility model and the rough SABR formula derived in
\cite{FG} explicitly exhibits a power-law term structure.

The aim of this paper is to verify the SABR and rough SABR formulae
by the no arbitrage argument beyond the lognormal case.

\newpage
\section{Asymptotically Arbitrage-free Approximation} 
Here we recall the definition of 
Asymptotically Arbitrage-free Approximation (AAA for short) given by
\cite{FG} and
derive an alternative expression of the defining equation.

Let $S = \{S_t\}$ be the underlying asset price process of an option
market, and $C=\{C_t\}$ be a call option price process with
strike price $K > 0$ and maturity $T>0$. 
We regard $(S,C)$ as a 2 dimensional stochastic process defined on a
filtered probability space $(\Omega,\mathcal{F},\mathsf{P},
\{\mathcal{F}_t\}_{t\in [0,T]})$.
We assume interest and dividend rates to be zero for brevity.
Denote by
$\Sigma^{\mathrm{BS}} = \{\Sigma^{\mathrm{BS}}_t\}$ and
$\Sigma^{\mathrm{B}} = \{\Sigma^{\mathrm{B}}_t\}$ respectively
 the Black-Scholes and the Bachelier implied volatility processes
defined by
$$C_t = P^{\mathrm{BS}}(S_t,K,T-t,\Sigma^{\mathrm{BS}}_t) =
P^{\mathrm{B}}(S_t,K,T-t,\Sigma^{\mathrm{B}}_t), $$
  where $P^{\mathrm{BS}}(S,K,\tau,\sigma)$
and
$ P^{\mathrm{B}}(S,K,\tau,\sigma)$ are respectively the
Black-Scholes and Bachelier call prices with volatility $\sigma$, namely
\begin{equation*}
\begin{split}
& P^{\mathrm{BS}}(S,K,\tau,\sigma) = S\Phi(d_+) - K\Phi(d_-), \ \ d_\pm
 = 
\frac{\log \frac{S}{K}}{\sigma\sqrt{\tau}} \pm \frac{\sigma
 \sqrt{\tau}}{2},\\
& P^{\mathrm{B}}(S,K,\tau,\sigma) = \sigma\sqrt{\tau}\phi\left(
\frac{S-K}{\sigma\sqrt{\tau}} \right) +
 (S-K)\Phi\left(\frac{S-K}{\sigma\sqrt{\tau}}\right).
\end{split}
\end{equation*}
As is well-known, a sufficient and almost necessary condition for 
 $(S,C)$ to be arbitrage-free is that there exists an equivalent
probability measure
$\mathsf{Q}$ to $\mathsf{P}$ such that 
$(S,C)$ is a local martingale under $\mathsf{Q}$.
Since 
$\Sigma^{\mathrm{BS}}$ and $\Sigma^{\mathrm{B}}$ are nonlinear
transformations of the price $C$, they are not local martingales but
should satisfy certain constraints on their finite variation components under
$\mathsf{Q}$.
Assuming that $(S,C)$ is a Brownian local martingale under $\mathsf{Q}$ and
denoting by $D^{\mathrm{BS}}\mathrm{d}t$ and $D^\mathrm{B}\mathrm{d}t$
respectively the drift components of $\mathrm{d}\Sigma^{\mathrm{BS}}$
and $\mathrm{d}\Sigma^{\mathrm{B}}$, 
the constraints (the drift conditions) are 
\begin{equation*}
\begin{split}
 &  \frac{\mathrm{d}}{\mathrm{d}t}
 \langle \log S \rangle
+ 2 k
\frac{\mathrm{d}}{\mathrm{d}t}
\langle \log S, \log \Sigma^\mathrm{BS} \rangle
+ k^2
\frac{\mathrm{d}}{\mathrm{d}t}
\langle \log \Sigma^\mathrm{BS} \rangle
- (\Sigma^\mathrm{BS})^2 + 2\Sigma^\mathrm{BS}\tau D^\mathrm{BS}
\\ &= - (\Sigma^\mathrm{BS})^2\tau
\frac{\mathrm{d}}{\mathrm{d}t}
\langle \log S, \log \Sigma^\mathrm{BS} \rangle
+ \frac{(\Sigma^\mathrm{BS})^4\tau^2}{4}\frac{\mathrm{d}}{\mathrm{d}t}
\langle \log \Sigma^\mathrm{BS} \rangle
\end{split}
\end{equation*}
and
\begin{equation*}
   \frac{\mathrm{d}}{\mathrm{d}t}
 \langle \log S \rangle
+ 2 x
\frac{\mathrm{d}}{\mathrm{d}t}
\langle \log S, \log \Sigma^\mathrm{B} \rangle
+ x^2
\frac{\mathrm{d}}{\mathrm{d}t}
\langle \log \Sigma^\mathrm{B} \rangle
- (\Sigma^\mathrm{B})^2 + 2\Sigma^\mathrm{B}\tau D^\mathrm{B} = 0,
\end{equation*}
where $k= \log K/S$, $x = K-S$ and $\tau = T-t$.
Based on this observation, Fukasawa and Gatheral~\cite{FG} introduced the following notion:
\begin{definition}
Let $\hat{\Sigma} = \{\hat\Sigma_t\}$  be a positive continuous It\^o
 process on $[0,T)$
and denote by  $\hat{D}\,\mathrm{d}t$ the drift part of
 $\mathrm{d}\hat\Sigma$ under $\mathsf{Q}$.
We say $\hat\Sigma$ is an AAA of
 $\Sigma^\mathrm{BS}$ if there exist
a continuous function
 $\varphi$ on $\mathbb{R}$
and a continuous process $\Psi = \{\Psi_t\}$ on $[0,T]$ such that
\begin{equation}\label{def1}
\begin{split}
 & \Biggl|  \frac{\mathrm{d}}{\mathrm{d}t}
 \langle \log S \rangle
+ 2 k
\frac{\mathrm{d}}{\mathrm{d}t}
\langle \log S, \log \hat\Sigma \rangle
+ k^2
\frac{\mathrm{d}}{\mathrm{d}t}
\langle \log \hat\Sigma \rangle
\\ &- \hat\Sigma^2 + 2\hat\Sigma\tau \hat{D}
+\hat\Sigma^2\tau
\frac{\mathrm{d}}{\mathrm{d}t}
\langle \log S, \log \hat\Sigma \rangle
- \frac{\hat\Sigma^4\tau^2}{4}\frac{\mathrm{d}}{\mathrm{d}t}
\langle \log \hat\Sigma \rangle \Biggr| \leq \varphi(\Psi
 \hat{\Sigma})\cdot
 o_p(1)
\end{split}
\end{equation}
as $\tau = T-t \to 0$.
We say $\hat{\Sigma}$
 is an AAA of
 $\Sigma^\mathrm{B}$ if there exist
a continuous function
 $\varphi$ on $\mathbb{R}$
and a continuous process $\Psi = \{\Psi_t\}$ on $[0,T]$ such that
\begin{equation}\label{def2}
 \Biggl|  \frac{\mathrm{d}}{\mathrm{d}t}
 \langle S \rangle
+ 2 x
\frac{\mathrm{d}}{\mathrm{d}t}
\langle  S, \log \hat\Sigma \rangle
+ x^2
\frac{\mathrm{d}}{\mathrm{d}t}
\langle \log \hat\Sigma \rangle - \hat\Sigma^2 + 2\hat\Sigma\tau \hat{D}
 \Biggr| \leq \varphi(\Psi \hat{\Sigma})\cdot
 o_p(1)
\end{equation}
as $\tau = T-t \to 0$.
\end{definition}

\begin{remark}\upshape
In \cite{FG}, as usual, $o_p(1)$ is interpreted as a term which converges to
$0$ in probability. This is however not really reflecting the idea behind
AAA in that a constant term $a = a\hat\Sigma/\hat\Sigma$
 also has a form of $\varphi(\Psi\hat\Sigma)
\cdot o_p(1)$ when $\hat\Sigma \to \infty$ in
probability. We indeed have  such a divergence for any fixed $K>0$ when
 considering the rough SABR approximation. To remedy this,
in the present article, we interpret $o_p(1)$ as a term
which converges to $0$ uniformly in the strike price $K > 0$.
\end{remark}

\begin{remark}\upshape
Any locally bounded function on $\mathbb{R}$ is dominated by a
continuous function. Therefore the continuity property of $\varphi$ in
the definition of AAA can be replaced by the local boundedness of $\varphi$.
\end{remark}

The first observation of this study is the following simplification by
It\^o's formula:
for a continuous semimartingale $X$ and 
a positive continuous semimartingale $\hat\Sigma$,
\begin{equation*} 
\mathrm{d}
 \langle X \rangle
- 2 X
\mathrm{d}
\langle X, \log \hat\Sigma \rangle
+ X^2
\mathrm{d}
\langle \log \hat\Sigma \rangle
= \hat\Sigma^2 \mathrm{d}\left\langle \frac{X}{\hat\Sigma} \right\rangle.
\end{equation*}
Recalling that $k =\log K/S$ and $x = K-S$ and so that
$\mathrm{d}k = -\log S$ and $\mathrm{d}x = - \mathrm{d}S$, we conclude
that \eqref{def1} and \eqref{def2} are respectively equivalent to 
\begin{equation}\label{def3}
\Biggl|
\frac{\mathrm{d}}{\mathrm{d}t}
 \left\langle \frac{k}{\hat\Sigma} \right\rangle
- 1 + \frac{2\tau \hat{D}}{\hat\Sigma}
-\tau
\frac{\mathrm{d}}{\mathrm{d}t}
\langle k, \log \hat\Sigma \rangle
- \frac{\tau^2}{4}\frac{\mathrm{d}}{\mathrm{d}t}
\langle \hat\Sigma \rangle \Biggr| \leq \frac{\varphi(\Psi
 \hat{\Sigma})}{\hat\Sigma^2}\cdot
 o_p(1)
\end{equation}
and
\begin{equation*}\label{def4}
\Biggl|
\frac{\mathrm{d}}{\mathrm{d}t}
 \left\langle \frac{x}{\hat\Sigma} \right\rangle
- 1 + \frac{2\tau \hat{D}}{\hat\Sigma}
 \Biggr| \leq \frac{\varphi(\Psi
 \hat{\Sigma})}{\hat\Sigma^2}\cdot
 o_p(1).
\end{equation*}

We conclude this section with some additional definitions.
\begin{definition}
We say 
$f$ is a $C^{1a}$ function  if 
 $f$ is differentiable,
the derivative $f^\prime$ is absolutely continuous,
and the Radon-Nikodym derivative $f^{\prime\prime}$ of $f^\prime$ is
locally bounded on the domain of $f$.
We say $g$ is a $C^{2+}$ function if  $g$ is $C^2$ and
\begin{equation*}
 f(x) = 
\begin{cases}
 \frac{x}{g(x)} & x \neq 0, \\
\frac{1}{g^\prime(0)}  & x = 0
\end{cases}
\end{equation*}
is a positive $C^{1a}$ function on $\mathbb{R}$. 
\end{definition}
\begin{remark}
 \upshape
By the Tanaka formula, It\^o's formula for $C^2$ functions 
remains true for $C^{1a}$ functions (see e.g., Section~1, Chapter~VI of
\cite{RY}).
\end{remark}
\begin{remark}
 \upshape
By Lemma~\ref{ratio}, if $g$ is $C^3$
with
$g(0) = 0$ and
$g^\prime > 0$, then $g$ is a $C^{2+}$ function.
\end{remark}

\section{The BBF and SABR formulae}

The BBF formula refers to an approximation formula
\begin{equation*}\label{BBF}
\hat\Sigma = \frac{\log
 \frac{K}{S}}{\int_{S}^K\frac{\mathrm{d}s}{v(s,T)}}
\end{equation*}
under a local volatility model
\begin{equation}\label{local}
 \mathrm{d}S_t = v(S_t,t)\mathrm{d}W_t,
\end{equation}
where $W$ is a standard Brownian motion. We assume that $v$ is a positive $C^{2,0}$
function on $(0,\infty)\times [0,T]$ and
$S$ is positive and
continuous on $[0,T]$. Here, we interpret
\begin{equation*}
 \hat\Sigma_t = \lim_{S \to K} 
\frac{\log
 \frac{K}{S}}{\int_{S}^K\frac{\mathrm{d}s}{v(s,T)}} = \frac{v(K,T)}{K}
\end{equation*}
when $S_t = K$.
By Berestycki et al.~\cite{BBF1}, we know that
$\lim_{t \to T}\Sigma^\mathrm{BS}_t = \hat\Sigma_T$.
In contrast to a technical argument in \cite{BBF1}, 
here we can 
easily verify the BBF formula in the sense that it provides an AAA.

\begin{proposition}
 
Let $f$ be a positive 
$C^{1a}$ function on $(0,\infty)$ with $f(K) \neq 0$, and let
\begin{equation*}
 \hat{\Sigma} = f(S).
\end{equation*}
Then, $\hat\Sigma$ is an AAA of $\Sigma^\mathrm{BS}$
 under \eqref{local} if and only if
\begin{equation*}
 f(s) =  \frac{\log
 \frac{K}{s}}{\int_{s}^K\frac{\mathrm{d}x}{v(x,T)}}
\end{equation*}
for $s \neq K$ and $f(K) = v(K,T)/K$.
\end{proposition}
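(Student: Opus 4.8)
The plan is to substitute $\hat\Sigma=f(S)$ into the simplified drift condition \eqref{def3} and read off an ordinary differential equation for $f$. First I would observe that, since $f$ is a positive $C^{1a}$ function and $s\mapsto\log(K/s)$ is smooth on $(0,\infty)$, the ratio $g(s):=\log(K/s)/f(s)$, with the continuous value $g(K)=0$, is again a $C^{1a}$ function, so the generalized It\^o formula recalled in the remark above applies to $f(S)$, $\log f(S)$ and $g(S)=k/\hat\Sigma$. Under \eqref{local} this gives $\hat D=\tfrac12 f''(S)v(S,\cdot)^2$, $\tfrac{d}{dt}\langle k/\hat\Sigma\rangle=g'(S)^2v(S,\cdot)^2$, $\tfrac{d}{dt}\langle k,\log\hat\Sigma\rangle=-f'(S)v(S,\cdot)^2/(Sf(S))$ and $\tfrac{d}{dt}\langle\hat\Sigma\rangle=f'(S)^2v(S,\cdot)^2$, so the expression inside the absolute value in \eqref{def3} is
$$Q_t:=g'(S_t)^2v(S_t,t)^2-1+\tau v(S_t,t)^2\Bigl(\frac{f''(S_t)}{f(S_t)}+\frac{f'(S_t)}{S_tf(S_t)}\Bigr)-\frac{\tau^2}{4}f'(S_t)^2v(S_t,t)^2 .$$
Hence $\hat\Sigma$ is an AAA of $\Sigma^{\mathrm{BS}}$ if and only if $|Q_t|\le\varphi(\Psi_t\hat\Sigma_t)\hat\Sigma_t^{-2}\,o_p(1)$ for some admissible $\varphi$ and $\Psi$.

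For the ``if'' direction I would take $g(s)=\int_s^Kv(x,T)^{-1}\,\mathrm{d}x$, i.e.\ $f$ equal to the BBF function; then $g'(s)=-v(s,T)^{-1}$, so the first term of $Q_t$ is $v(S_t,t)^2/v(S_t,T)^2-1$, which tends to $0$ as $\tau\to0$ by joint continuity of $v$ and continuity of $S$, uniformly in $K$ since it does not involve $K$. The remaining terms of $Q_t$ carry the deterministic factors $\tau$ and $\tau^2$ and bounded coefficients (bounded because $f,f',v$ are continuous, $f''$ is locally bounded and the paths of $S$ stay in a compact set); choosing $\varphi(x)=1+x^2$ and a continuous process $\Psi$ — which may be taken constant in $t$, hence continuous despite the possible discontinuity of $f''$, and may depend on $K$ — large enough that $\varphi(\Psi_t\hat\Sigma_t)\hat\Sigma_t^{-2}$ dominates $1$ together with those coefficients, one gets $|Q_t|\le\varphi(\Psi_t\hat\Sigma_t)\hat\Sigma_t^{-2}\bigl(|v(S_t,t)^2/v(S_t,T)^2-1|+\tau+\tau^2\bigr)$, and the last factor is $o_p(1)$ uniformly in $K$.

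For the ``only if'' direction, assuming $\hat\Sigma=f(S)$ is an AAA for the given strike $K$, I would let $\tau\to0$ in \eqref{def3}: the prefactor $\varphi(\Psi_t\hat\Sigma_t)\hat\Sigma_t^{-2}$ converges by continuity to a finite random limit while $o_p(1)\to0$, so the right-hand side tends to $0$ in probability; on the left the terms carrying $\tau$ and $\tau^2$ vanish (bounded coefficients times $\tau$), and $g'(S_t)^2v(S_t,t)^2\to g'(S_T)^2v(S_T,T)^2$ since $g'$ is continuous. Therefore $g'(S_T)^2v(S_T,T)^2=1$ almost surely, and, $s\mapsto g'(s)^2v(s,T)^2$ being continuous, $g'(s)^2v(s,T)^2=1$ for every $s$ in the range of $S$, which equals $(0,\infty)$ for the non-degenerate diffusion \eqref{local}. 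Since $g(s)=\log(K/s)/f(s)$ is positive for $s<K$, negative for $s>K$ and zero at $K$, $g$ is decreasing near $K$, so the admissible root is $g'(s)=-v(s,T)^{-1}$; integrating from $K$, where $g(K)=0$, gives $g(s)=\int_s^Kv(x,T)^{-1}\,\mathrm{d}x$, i.e.\ $f(s)=\log(K/s)/\int_s^Kv(x,T)^{-1}\,\mathrm{d}x$ for $s\neq K$, and then $f(K)=\lim_{s\to K}f(s)=v(K,T)/K$ by l'H\^opital's rule and continuity of $f$.

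The step I expect to be the main obstacle is the bookkeeping behind $Q_t$ — in particular verifying that $g$ really is $C^{1a}$, so that the It\^o--Tanaka calculus is legitimate, and assembling $Q_t$ without sign errors. In the ``only if'' direction the delicate point is the passage from the identity holding at $S_T$ to its holding on the whole range of $S$, which relies on the marginals of \eqref{local} having full support; once that is granted, and once one notices that every coefficient in the $\tau$- and $\tau^2$-terms is finite, both directions are short.
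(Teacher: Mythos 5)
Your proposal is correct and follows essentially the same route as the paper: pass to $g(s)=\log(K/s)/f(s)$, note $k/\hat\Sigma=g(S)$, observe that all $\tau$- and $\tau^2$-weighted terms in \eqref{def3} are $O_p(\tau)$, and reduce the AAA condition to $g'(s)^2v(s,T)^2=1$, which together with $g(K)=0$ and the sign of $g$ forces $g(s)=\int_s^K v(x,T)^{-1}\,\mathrm{d}x$. Your write-up is merely more explicit than the paper's on two points it leaves implicit — the exact form of the remainder $Q_t$ and the fact that passing from the a.s.\ identity at $S_T$ to the identity for all $s$ uses that the law of $S_T$ has full support on $(0,\infty)$.
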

\begin{proof}
Let
\begin{equation*}
 g(s) = \frac{\log \frac{K}{s}}{f(s)}.
\end{equation*}
 Then, $g$ is a $C^{1a}$ function and
%\begin{equation*}
$ \frac{k}{\hat\Sigma} = g(S) $.
%\end{equation*}
By the It\^o-Tanaka formula,
\begin{equation*}
\mathrm{d} \left\langle  \frac{k}{\hat\Sigma}\right\rangle
= g^\prime(S)^2\mathrm{d}\langle S \rangle = 
g^\prime(S)^2v(S,t)^2\mathrm{d}t
\end{equation*}
and
$\hat{D}$, $\hat\Sigma^{-1}$, and $ \frac{\mathrm{d}}{\mathrm{d}t}\langle
 \hat\Sigma \rangle$ are $O_p(1)$ as $t \to T$.
Therefore \eqref{def3} is satisfied if and only if
\begin{equation*}
| g^\prime(s)v(s,T) | = 1.
\end{equation*}
Since $g(0) = 0$ and $f$ is positive, this is equivalent to
\begin{equation*}
 g(s) = \int_s^K \frac{\mathrm{d}x}{v(s,T)}.
\end{equation*}
\end{proof}
The same argument with $k$ replaced by $x = K-S$ shows that 
\begin{equation*}
 \hat\Sigma = \frac{K-S}{\int_{S}^K \frac{\mathrm{d}s}{v(s,T)}}
\end{equation*}
is an AAA of $\Sigma^\mathrm{B}$ under \eqref{local}.\\

The SABR formula refers to 
\begin{equation}\label{SABRf}
 \hat\Sigma = \frac{\nu k}{g(Y)}, \ \ Y
= \frac{\nu}{\alpha}\int_S^K\frac{\mathrm{d}s}{\beta(s)}, \ \ 
g(y) = - \log \frac{\sqrt{1+ 2\rho y + y^2}-y-\rho}{1-\rho}
\end{equation}
proposed by Hagan et al.~\cite{HaganEtAl} as
an approximation of $\Sigma^\mathrm{BS}$ under
the SABR model
\begin{equation}\label{SABRm}
 \mathrm{d}S_t = \alpha_t \beta(S_t) \mathrm{d}Z_t, \ \ 
\mathrm{d}\alpha_t = \nu \alpha_t \mathrm{d}W_t, \ \ 
\mathrm{d}\langle Z,W\rangle_t = \rho \mathrm{d}t
\end{equation}
with $ \mathrm{d}\langle Z\rangle_t 
= \mathrm{d}\langle W\rangle_t  
= \mathrm{d}t$,
where $\rho \in (-1,1)$ and $\nu > 0$ are constants,
and $\beta$ is a positive $C^2$
function on $(0,\infty)$.
Here we assume that $S$ is positive and continuous on $[0,T]$.
The validity of the approximation has been discussed by Berestycki et
al.~\cite{BBF2}, Osajima~\cite{Osajima}  and others.
In the lognormal case, that is, $\beta(s)= s$, 
Balland~\cite{Balland} shows that the approximation \eqref{SABRf} is, in our
terminology, an AAA.
Our alternative expression \eqref{def3} allows us to observe it is the case
in general.
\begin{proposition}
 
Let $g$ be a $C^{2+}$ function on $\mathbb{R}$ and 
\begin{equation*}
 \hat{\Sigma} =
  \frac{\nu k}{g(Y)}
, \ \ Y = \frac{\nu}{\alpha}\int_S^K\frac{\mathrm{d}s}{\beta(s)}.
\end{equation*}
Then, $\hat\Sigma$ is  an AAA of $\Sigma^\mathrm{BS}$
under \eqref{SABRm} if and only if
\begin{equation}\label{SABRg}
 g(y) = - \log \frac{\sqrt{1+ 2\rho y + y^2}-y-\rho}{1-\rho}.
\end{equation}
\end{proposition}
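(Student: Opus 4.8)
The plan is to run the argument of the BBF Proposition with the criterion \eqref{def3}. The decisive algebraic fact is that $k/\hat\Sigma = g(Y)/\nu$, so the first term of \eqref{def3} reduces to a computation with the scalar process $Y$. Writing $h(s) = \int_s^K \mathrm{d}u/\beta(u)$, so that $Y = (\nu/\alpha)\,h(S)$, and noting that $S$ and $\alpha$ are already local martingales under $\mathsf{P}$ (so we may take $\mathsf{Q}=\mathsf{P}$), I would apply It\^o's product rule to $Y = \nu\cdot(1/\alpha)\cdot h(S)$, using $\mathrm{d}(1/\alpha) = -(\nu/\alpha)\,\mathrm{d}W + (\nu^2/\alpha)\,\mathrm{d}t$, $\mathrm{d}h(S) = -\alpha\,\mathrm{d}Z + (\cdots)\,\mathrm{d}t$ and $\mathrm{d}\langle Z,W\rangle = \rho\,\mathrm{d}t$, to find that the martingale part of $\mathrm{d}Y$ is $-\nu(Y\,\mathrm{d}W + \mathrm{d}Z)$. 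Consequently, by It\^o's formula applied to the $C^2$ function $g$,
\begin{equation*}
\frac{\mathrm{d}}{\mathrm{d}t}\langle Y\rangle = \nu^2\bigl(1 + 2\rho Y + Y^2\bigr),\qquad
\frac{\mathrm{d}}{\mathrm{d}t}\left\langle \frac{k}{\hat\Sigma}\right\rangle
= \frac{g'(Y)^2}{\nu^2}\,\frac{\mathrm{d}}{\mathrm{d}t}\langle Y\rangle
= g'(Y)^2\bigl(1 + 2\rho Y + Y^2\bigr).
\end{equation*}

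Next I would verify that $\hat\Sigma$ is a positive continuous It\^o process on $[0,T)$ and that the remaining ingredients $\hat D$, $\hat\Sigma^{-1}$, $\tfrac{\mathrm{d}}{\mathrm{d}t}\langle\hat\Sigma\rangle$, $\tfrac{\mathrm{d}}{\mathrm{d}t}\langle k,\log\hat\Sigma\rangle$ of \eqref{def3} are $O_p(1)$ as $t\to T$. The useful rewriting is $\hat\Sigma = \alpha\,r(S)\,f(Y)$, where $r(s) = \frac{\log(K/s)}{\int_s^K \mathrm{d}u/\beta(u)}$ is the BBF ratio — a positive $C^2$ function, being a quotient of two $C^3$ functions with a common simple zero at $s=K$ — and $f(y) = y/g(y)$ is the positive $C^{1a}$ function attached to $g$ by the $C^{2+}$ hypothesis (which in particular forces $g(0)=0$ and $g'(0)>0$, since otherwise $f$ would not be continuous and positive at $0$). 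Since $r,r',r''$ and $f,f'$ and the Radon–Nikodym $f''$ are locally bounded, the It\^o–Tanaka formula presents each of these four quantities as a continuous process, hence as $O_p(1)$ near $T$; thus the terms $2\tau\hat D/\hat\Sigma$, $\tau\tfrac{\mathrm{d}}{\mathrm{d}t}\langle k,\log\hat\Sigma\rangle$ and $\tfrac{\tau^2}{4}\tfrac{\mathrm{d}}{\mathrm{d}t}\langle\hat\Sigma\rangle$ vanish as $\tau\to 0$. For the ``if'' direction one needs this vanishing to be uniform in $K$ in the sense of the Remark; the way to get it is to bound these terms explicitly through $\alpha$, $r(S)$, $f(Y)$ and their derivatives, and then to choose $\Psi$ and a sufficiently fast-growing $\varphi$ (a power will do) so that $\varphi(\Psi\hat\Sigma)/\hat\Sigma^2$ dominates the $K$-dependent prefactors. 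This uniformity, not any individual computation, is the main obstacle.

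Granting these points, \eqref{def3} reduces to the condition $g'(Y)^2\bigl(1 + 2\rho Y + Y^2\bigr) = 1$ in the limit $t\to T$. As $K$ ranges over $(0,\infty)$ and $(S_T,\alpha_T)$ over its support, $Y_T = (\nu/\alpha_T)\int_{S_T}^K \mathrm{d}s/\beta(s)$ covers all of $\mathbb{R}$, so by continuity of $g'$ this is equivalent to the pointwise ODE $g'(y)^2(1 + 2\rho y + y^2) = 1$ on $\mathbb{R}$. Since $1 + 2\rho y + y^2 > 0$ for $\rho^2 < 1$, the derivative $g'$ never vanishes, hence $g' > 0$ throughout (as $g'(0)>0$), so $g'(y) = (1 + 2\rho y + y^2)^{-1/2}$; integrating from $0$ with $g(0)=0$ gives $g(y) = \log\bigl((y+\rho+\sqrt{1+2\rho y+y^2})/(1+\rho)\bigr)$, which rationalizes to \eqref{SABRg}. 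Conversely, when $g$ is \eqref{SABRg} the ODE holds identically and, by the previous paragraph, all residual terms are $o_p(1)$ uniformly in $K$, so $\hat\Sigma$ is an AAA of $\Sigma^{\mathrm{BS}}$.
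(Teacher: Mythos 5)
Your proposal is correct and follows essentially the same route as the paper: reduce $k/\hat\Sigma$ to $g(Y)/\nu$, compute the martingale part of $\mathrm{d}Y$ as $-\nu\,\mathrm{d}Z-\nu Y\,\mathrm{d}W$ to get $\frac{\mathrm{d}}{\mathrm{d}t}\langle k/\hat\Sigma\rangle=g'(Y)^2(1+2\rho Y+Y^2)$, dismiss the remaining terms of \eqref{def3} as $O_p(1)$ via It\^o--Tanaka and the $C^{1a}$ property of $f(y)=y/g(y)$, and solve the resulting ODE with $g(0)=0$, $g'(0)>0$. You merely spell out in more detail the decomposition $\hat\Sigma=\alpha\,r(S)f(Y)$ behind the $O_p(1)$ claims and the uniformity-in-$K$ point, which the paper also leaves at this level of rigor.
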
 
\begin{proof}
 Since
\begin{equation*}
 \frac{k}{\hat\Sigma} = \frac{g(Y)}{\nu}
\end{equation*}
and
\begin{equation*}
 \mathrm{d}Y = -\frac{\nu}{\alpha \beta(S)} \mathrm{d}S
- \frac{Y}{\alpha}\mathrm{d}\alpha + \text{drift}
= -\nu \mathrm{d}Z - \nu Y \mathrm{d}W +  \text{drift},
\end{equation*}
we have
\begin{equation*}
 \mathrm{d}\left\langle \frac{k}{\hat\Sigma} \right\rangle
= \frac{g^\prime(Y)^2}{\nu^2}\mathrm{d}\langle Y \rangle
= g^\prime(Y)^2(1 + 2 \rho Y + Y^2)\mathrm{d}t.
\end{equation*}
By the It\^o-Tanaka formula, 
 $\hat{D}$, $\hat\Sigma^{-1}$, and $\frac{\mathrm{d}}{\mathrm{d}t}\langle
 \hat\Sigma \rangle$ are $O_p(1)$ as $t \to T$.
Therefore, \eqref{def3} is satisfied if and only if $g$ solves
$$ g^\prime(y)^2(1 + 2 \rho y + y^2) = 1.$$
The unique solution of this ordinary differential equation
with $g(0) = 0$ and $g(y)/y >0$ for $y\neq 0$ is the one given by \eqref{SABRg}.
\end{proof}
The same argument with $k$ replaced by $x = K-S$ shows that 
\begin{equation*}
 \hat\Sigma = \frac{\nu x}{g(Y)}
\end{equation*}
is an AAA of $\Sigma^\mathrm{B}$ under \eqref{SABRm}.

\section{The rough SABR formula}
Here we consider 
the rough SABR formula proposed by Fukasawa and Gatheral~\cite{FG}.
Suppose a positive continuous local martingale $S$ follows
\begin{equation}\label{rSABRm}
 \mathrm{d}S_t = \alpha_t \beta(S_t)\mathrm{d}Z_t, \ \ 
\mathrm{d}\xi_t(s) = \zeta(s-t)\xi_t(s) \mathrm{d}W_t,\ \ 
\mathrm{d}\langle Z, W \rangle_t = \rho \mathrm{d}t
\end{equation}
with $ \mathrm{d}\langle Z\rangle_t 
= \mathrm{d}\langle W\rangle_t  
= \mathrm{d}t$, 
where $\alpha_t = \sqrt{\xi_t(t)}$,
$\zeta(t) = \eta\sqrt{2H}(t_+)^{H-1/2}$,
$\beta$ is a positive $C^2$ function on $(0,\infty)$ and $H \in (0,1/2]$,
$\rho \in (-1,1)$ and $\eta > 0$ are constants.
The approximation formula to be examined is
\begin{equation}\label{rSABRf}
 \hat\Sigma = \frac{\zeta(\tau)k}{g(Y)}, \ \ Y =
  \frac{\zeta(\tau)}{U}\int_{S}^K \frac{\mathrm{d}s}{\beta(s)}, \ \
 U = \sqrt{\frac{1}{\tau}\int_t^{T} \xi_t(s)\mathrm{d}s},
\end{equation}
where $g$ is a solution of
 the differential equation
\begin{equation}\label{ode}
 g^\prime(y)^2\left(1 + 2\rho \frac{y}{2H+1} +
	       \frac{y^2}{(2H+1)^2}\right)
= 1 - (1-2H)\left(1 - \frac{yg^\prime(y)}{g(y)}\right)
\end{equation}
with
\begin{equation}\label{gp}
g(0) = 0, \ \ g^\prime(0) > 0. 
\end{equation}
See Lemma~\ref{exist} below for the unique existence of the solution.
Further by Lemmas~\ref{ratio} and \ref{exist}, the solution $g$ is a $C^{2+}$
function.
In the lognormal case, that is, $\beta(s) = s$, it is shown in
\cite{FG} that the formula \eqref{rSABRf}
gives an AAA of $\Sigma^\mathrm{BS}$.
It is however left open in \cite{FG}
whether \eqref{rSABRf} is an AAA for general
$\beta$.
\begin{theorem}
Let $g$ be a $C^{2+}$ function and let $\hat\Sigma$ be defined as in
 \eqref{rSABRf}. Then, $\hat{\Sigma}$ is an AAA of
 $\Sigma^\mathrm{BS}$ under \eqref{rSABRm} if and only if 
$g$ solves \eqref{ode} with \eqref{gp}.
\end{theorem}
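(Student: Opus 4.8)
The plan is to verify the equivalent defining relation \eqref{def3}, estimating its five terms as $\tau=T-t\to0$. Put $\psi(y)=y/g(y)$ (with $\psi(0)=1/g'(0)$), a positive $C^{1a}$ function because $g$ is $C^{2+}$, and
\[
 h(s)=\frac{\log(K/s)}{\int_s^K\mathrm{d}x/\beta(x)},\qquad h(K)=\frac{\beta(K)}{K},
\]
a positive $C^2$ function since $1/\beta$ is $C^2$ (see Lemma~\ref{ratio}). Then one has the identities $k/\hat\Sigma=g(Y)/\zeta(\tau)$ and $\hat\Sigma=U\,h(S)\,\psi(Y)$. Here $U$ is a positive continuous It\^o process on $[0,T)$: writing $V_t:=\int_t^T\xi_t(s)\,\mathrm{d}s$ one has $\mathrm{d}V_t=-\alpha_t^2\,\mathrm{d}t+A_t\,\mathrm{d}W_t$ with $A_t:=\int_t^T\zeta(s-t)\xi_t(s)\,\mathrm{d}s$ and $U^2=V/\tau$, so $\hat\Sigma$ is an It\^o process by the It\^o--Tanaka formula for the $C^{1a}$ functions $h$ and $\psi$, and every term in \eqref{def3} is read off from the semimartingale decompositions of $U$, $S$ and $Y$. (One must work with $V$ and $U$ rather than with $\alpha_t=\sqrt{\xi_t(t)}$, which for $H<1/2$ need not be a semimartingale.)

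The heart of the matter is the small-$\tau$ behaviour of those decompositions. Since $S$ is a local martingale it has zero drift; $\zeta(\tau)$ is deterministic with $\zeta'(\tau)/\zeta(\tau)=(H-\tfrac12)/\tau$; and continuity of $s\mapsto\xi_t(s)$ gives $U^2\to\alpha_t^2$ and $A_t/(\zeta(\tau)\tau)\to 2\alpha_t^2/(2H+1)$ as $\tau\to0$. Consequently, from $Y=\zeta(\tau)U^{-1}\int_S^K\mathrm{d}s/\beta(s)$ and $\mathrm{d}\zeta(\tau)=-\zeta'(\tau)\,\mathrm{d}t$,
\[
 \mathrm{d}Y\approx-\zeta(\tau)\,\mathrm{d}Z-\frac{Y\zeta(\tau)}{2H+1}\,\mathrm{d}W+\frac{(1-2H)Y}{2\tau}\,\mathrm{d}t ,
\]
the surviving drift $(1-2H)Y/(2\tau)$ arising exactly from the finite-variation factor $\zeta(\tau)$ inside $Y$; the discarded corrections to the drift, multiplied by $2\tau$, tend to $0$ uniformly for $Y$ in bounded sets, using $\zeta(\tau)^2\tau=2H\eta^2\tau^{2H}\to0$ and $\tau\zeta(\tau)=\eta\sqrt{2H}\,\tau^{H+1/2}\to0$. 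Likewise $\mathrm{d}U$ and $\mathrm{d}(U^{-1})$ have martingale parts of order $\zeta(\tau)$ and drifts whose product with $2\tau$ tends to $0$.

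Substituting into \eqref{def3}: because $\zeta(\tau)$ is deterministic,
\[
 \frac{\mathrm{d}}{\mathrm{d}t}\Bigl\langle\frac{k}{\hat\Sigma}\Bigr\rangle=\frac{g'(Y)^2}{\zeta(\tau)^2}\frac{\mathrm{d}}{\mathrm{d}t}\langle Y\rangle\to g'(Y)^2\Bigl(1+\frac{2\rho Y}{2H+1}+\frac{Y^2}{(2H+1)^2}\Bigr),
\]
the left-hand side of \eqref{ode}. Writing $\hat D=\hat\Sigma\cdot(\text{drift of }\mathrm{d}\log\hat\Sigma)+\tfrac12\hat\Sigma\,\frac{\mathrm{d}}{\mathrm{d}t}\langle\log\hat\Sigma\rangle$ with $\mathrm{d}\log\hat\Sigma=\mathrm{d}\log U+\mathrm{d}\log h(S)+\mathrm{d}\log\psi(Y)$, every contribution to $2\tau\hat D/\hat\Sigma$ tends to $0$ except $2\tau\,\frac{\psi'(Y)}{\psi(Y)}\,(\text{drift of }\mathrm{d}Y)$, which tends to $\frac{\psi'(Y)}{\psi(Y)}Y(1-2H)=(1-2H)\bigl(1-\frac{Yg'(Y)}{g(Y)}\bigr)$, using $y\psi'(y)/\psi(y)=1-yg'(y)/g(y)$. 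Finally $\tau\frac{\mathrm{d}}{\mathrm{d}t}\langle k,\log\hat\Sigma\rangle=O(\tau\zeta(\tau))\to0$ and $\tfrac{\tau^2}{4}\frac{\mathrm{d}}{\mathrm{d}t}\langle\hat\Sigma\rangle=O(\tau^2\zeta(\tau)^2)\to0$. Hence the bracket in \eqref{def3} converges to
\[
 g'(Y)^2\Bigl(1+\frac{2\rho Y}{2H+1}+\frac{Y^2}{(2H+1)^2}\Bigr)-1+(1-2H)\Bigl(1-\frac{Yg'(Y)}{g(Y)}\Bigr),
\]
which vanishes identically in $Y$ if and only if $g$ solves \eqref{ode}; together with \eqref{gp} and Lemma~\ref{exist} this identifies $g$ uniquely.

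The step I expect to require the most care, and for which the revised reading of $o_p(1)$ in the Remark is essential, is upgrading the above convergences to hold uniformly in the strike $K>0$. From \eqref{ode}--\eqref{gp} one has, as $|y|\to\infty$, $g'(y)\sim(2H+1)\sqrt{2H}/|y|$, $g(y)\sim\log|y|$, $\psi(y)\sim|y|/\log|y|$ and $\psi'(y)\sim1/\log|y|$, so every error term produced above is bounded pathwise by $C(1+\hat\Sigma)^{N}\rho(\tau)$, with $C$ and $N$ independent of $K$ and $\rho(\tau)$ a fixed power of $\tau$ tending to $0$; taking $\varphi(z)=1+z^{4}$ and $\Psi\equiv1$ then gives \eqref{def3}, which is the ``if'' direction. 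For the converse, if $g$ did not solve \eqref{ode} there would be $y_0$ with $\delta:=|(\text{displayed limit})(y_0)|>0$; for each small $\tau$ one can choose $K=K(\tau)$ so that $Y_t=y_0$, along which $\hat\Sigma_t$ remains bounded and bounded away from $0$, whence $\varphi(\Psi\hat\Sigma_t)/\hat\Sigma_t^2$ is bounded while the bracket tends to $\pm\delta\neq0$, contradicting \eqref{def3}. Therefore $\hat\Sigma$ is an AAA of $\Sigma^{\mathrm{BS}}$ if and only if $g$ solves \eqref{ode} with \eqref{gp}.
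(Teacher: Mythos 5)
Your proposal is correct and follows essentially the same route as the paper: the same identities $k/\hat\Sigma=g(Y)/\zeta(\tau)$ and $\hat\Sigma=U\,f(Y)\,\hat\Sigma^0$, the same semimartingale decompositions of $Y$ and $U$, and the same key limits $\alpha/U\to1$ and $R\to 1/(H+\tfrac12)$ (which the paper imports from Lemma~A.1 of Fukasawa--Gatheral rather than deriving from the continuity of $s\mapsto\xi_t(s)$), leading to the identical limiting bracket $g^\prime(Y)^2\bigl(1+\tfrac{2\rho Y}{2H+1}+\tfrac{Y^2}{(2H+1)^2}\bigr)-1+(1-2H)\bigl(1-\tfrac{Yg^\prime(Y)}{g(Y)}\bigr)$. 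Your additional care over uniformity in $K$ via the large-$|y|$ asymptotics of $g$, and the explicit choice of $K(\tau)$ in the ``only if'' direction, flesh out steps the paper leaves terse but do not change the argument.
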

\begin{proof}
 Since
\begin{equation*}
 \frac{k}{\hat\Sigma} = \frac{g(Y)}{\zeta(\tau)}
\end{equation*}
and
\begin{equation}\label{dy}
\begin{split}
 \mathrm{d}Y = &- \frac{\zeta(\tau)}{U\beta(S)}\mathrm{d}S
- \frac{Y}{U}\mathrm{d}U 
- \frac{\zeta^\prime(\tau)}{\zeta(\tau)} Y \mathrm{d}t
\\ & +
\frac{1}{2}\frac{\zeta(\tau)\beta^\prime(S)}{U\beta(S)^2}\mathrm{d}\langle
S \rangle +  \frac{\zeta(\tau)}{U^2 \beta(S)}\mathrm{d}\langle S, U \rangle
+ \frac{Y}{U^2}\mathrm{d}\langle U \rangle,
\end{split}
\end{equation}
we have
\begin{equation*}
 \mathrm{d}\left\langle
\frac{k}{\hat\Sigma} \right\rangle
=\frac{g^\prime(Y)^2}{\zeta(\tau)^2}\mathrm{d}\langle Y \rangle = 
 g^{\prime}(Y)^2\left(
\frac{\alpha^2}{U^2}\mathrm{d}t +
\frac{2\alpha Y }{U\zeta(\tau)}\mathrm{d}\langle Z,\log U \rangle +
 \frac{Y^2}{\zeta(\tau)^2}\mathrm{d}\langle \log U \rangle
\right).
\end{equation*}
As is observed in \cite{FG},
\begin{equation}\label{du}
\frac{\mathrm{d}U}{U} =
 \frac{1}{2}\zeta(\tau)\,R\,\mathrm{d}W + 
\frac{1}{2}\left(
\frac{1}{\tau}\left(1 -
	           \frac{\alpha^2}{U^2}\right)-\frac{1}{4}\zeta(\tau)^2
R^2
\right) \mathrm{d}t,
\end{equation}
where
\begin{equation*}
 R_t = \frac{\int_t^T\zeta(s-t)\xi_t(s)
  \mathrm{d}s}{\zeta(\tau)\int_t^T\, \xi_t(s) \mathrm{d}s}.
\end{equation*}
Therefore,
\begin{equation*}
 \frac{\mathrm{d}}{\mathrm{d}t} \langle Z, \log U \rangle
= \frac{1}{2}\zeta(\tau) R \rho, \ \ 
 \frac{\mathrm{d}}{\mathrm{d}t} \langle \log U \rangle
= \frac{1}{4}\zeta(\tau)^2R^2
\end{equation*}
and so
\begin{equation}\label{left}
 \frac{\mathrm{d}}{\mathrm{d}t}\left\langle
\frac{k}{\hat\Sigma} \right\rangle
=
 g^{\prime}(Y)^2\left(
\frac{\alpha^2}{U^2}+
\frac{\alpha }{U} R \rho Y + 
 \frac{R^2}{4}Y^2
\right).
\end{equation}
On the other hand, recalling that $f(y): = y/g(y)$ is a positive 
$C^{1a}$ function,
\begin{equation*}
 \hat\Sigma = U f(Y)\hat\Sigma^0, \ \ \hat\Sigma^0 =  \frac{k}{\int_S^K\frac{\mathrm{d}s}{\beta(s)}}
\end{equation*}
and so
\begin{equation*}
\begin{split}
\frac{\mathrm{d}\hat\Sigma}{\hat\Sigma} 
&=  \mathrm{d}\log\hat\Sigma + \frac{1}{2}
\mathrm{d}\langle \log \hat\Sigma \rangle\\
&= \mathrm{d} \log U + 
\mathrm{d}\log f(Y) + \mathrm{d} \log \hat\Sigma^0 +  \dots
\end{split}
\end{equation*}
Therefore, 
denoting by $\hat{D}\mathrm{d}t$ the drift part of
 $\mathrm{d}\hat\Sigma$ and
using \eqref{dy} and \eqref{du}, we have
\begin{equation}\label{right}
\begin{split}
 \frac{2\tau \hat{D}}{\hat\Sigma} &=
(1-2H)\left( 1 - \frac{\alpha^2}{U^2} +
 \frac{Yf^\prime(Y)}{f(Y)}\frac{\alpha^2}{U^2} \right) +\varphi(Y) \cdot
 O_p(\tau^{2H})
\\ &=
\left(1-2H\right)\left(1 - \frac{Yg^\prime(Y)}{g(Y)}\right)
+ \frac{Yg^\prime(Y)}{g(Y)} \left(1-\frac{\alpha^2}{U^2}\right)
+ \varphi(Y) \cdot O_p(\tau^{2H}) 
\end{split}
\end{equation}
for some locally bounded function $\varphi$.
By the continuity of $f$, $\varphi(Y)$ is dominated by
 $\tilde{\varphi}(\Psi \hat\Sigma)$ with 
a continuous function $\tilde \varphi$ and a continuous process $\Psi$.
By Lemma~A.1 of \cite{FG}, we know 
$\alpha/U \to 1$ and $R \to 1/(H+1/2)$.
Then \eqref{left} and \eqref{right} imply that 
\eqref{def3} is satisfied if and only if \eqref{ode} holds.
\end{proof}
The same argument with $k$ replaced by $x = K-S$ shows that 
\begin{equation*}
 \hat\Sigma = \frac{\zeta(\tau)x}{g(Y)}
\end{equation*}
is an AAA of $\Sigma^\mathrm{B}$ under \eqref{rSABRm}.\\

\appendix
\section{Lemmas}
\begin{lemma}\label{ratio}
 Let $g$ be a $C^2$ function on $\mathbb{R}$ with
\begin{equation}\label{exp0}
\begin{split}
& g(y) = g^\prime(0)y + g^{\prime\prime}(0)\frac{y^2}{2} + O(y^3), \\ 
&g^\prime(y) = g^\prime(0)+g^{\prime\prime}(0)y + O(y^2), \\
&g^{\prime\prime}(y) = g^{\prime\prime}(0) + O(y)
\end{split}
\end{equation}
as $y \to 0$. Then, the function $G$ defined by
\begin{equation*}
 G(y) =
\begin{cases}
g(y)/y & y \neq 0, \\
g^\prime(0) & y = 0
\end{cases} 
\end{equation*}
is $C^1$ with absolutely continuous derivative $G^\prime$ of which the
 Radon-Nikodym derivative  $G^{\prime\prime}$ is
locally bounded.
\end{lemma}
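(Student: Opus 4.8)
The plan is to localise the whole difficulty at the origin. For $y\neq 0$ the quotient $G(y)=g(y)/y$ is the ratio of a $C^2$ function by a non-vanishing $C^2$ function, hence $C^2$ there, so $G'$ is continuous and has a continuous (in particular locally bounded) derivative away from $0$. Thus everything reduces to analysing $G$ as $y\to 0$, and this is exactly where the expansions \eqref{exp0} are designed to be used.

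First I would check that $G$ is $C^1$ on all of $\mathbb{R}$. Continuity of $G$ at $0$ is immediate from the first line of \eqref{exp0}, since $g(y)/y = g'(0) + g''(0)y/2 + O(y^2)\to g'(0)=G(0)$. For differentiability at $0$, the difference quotient is $\bigl(G(y)-G(0)\bigr)/y = \bigl(g(y)-g'(0)y\bigr)/y^2 = g''(0)/2 + O(y)$, so $G'(0)=g''(0)/2$. For $y\neq 0$ one has $G'(y)=\bigl(yg'(y)-g(y)\bigr)/y^2$, and substituting the first two lines of \eqref{exp0} the numerator is $g''(0)y^2/2 + O(y^3)$, whence $G'(y)\to g''(0)/2 = G'(0)$; so $G'$ is continuous on $\mathbb{R}$.

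The heart of the matter is the second derivative. For $y\neq 0$ a direct computation gives
\begin{equation*}
G''(y)=\frac{y^2 g''(y) - 2y g'(y) + 2 g(y)}{y^3}.
\end{equation*}
Plugging in all three lines of \eqref{exp0} — $y^2 g''(y)=g''(0)y^2+O(y^3)$, $2yg'(y)=2g'(0)y+2g''(0)y^2+O(y^3)$, and $2g(y)=2g'(0)y+g''(0)y^2+O(y^3)$ — the terms of order $y$ cancel and the coefficient of $y^2$ in the numerator is $g''(0)-2g''(0)+g''(0)=0$, so the numerator is $O(y^3)$ and therefore $G''(y)=O(1)$ as $y\to 0$. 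Since $G''$ is continuous on $\mathbb{R}\setminus\{0\}$ and bounded near $0$, it is locally bounded on $\mathbb{R}$. I expect this double cancellation — at order $y$ and then at order $y^2$ — to be the one genuinely load-bearing step; it is short, but it is the only place that uses the full strength of \eqref{exp0}, and the claim fails for a generic $C^2$ function (one only gets remainders $o(y^2)$, $o(y)$ and continuity of $g''$).

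Finally I would upgrade "$G'$ continuous, with locally bounded classical derivative $G''$ off the origin" to "$G'$ absolutely continuous with Radon--Nikodym derivative $G''$". On a compact interval avoiding $0$ this is the fundamental theorem of calculus for the $C^1$ function $G'$. For $[a,b]$ with $a<0<b$ I would write $G'(b)-G'(a)=\bigl(G'(b)-G'(\varepsilon)\bigr)+\bigl(G'(\varepsilon)-G'(-\varepsilon)\bigr)+\bigl(G'(-\varepsilon)-G'(a)\bigr)$ and let $\varepsilon\downarrow 0$: the middle term vanishes by continuity of $G'$ at $0$, while $\int_{\varepsilon}^{b} G'' \to \int_0^b G''$ and $\int_{a}^{-\varepsilon} G'' \to \int_a^0 G''$ because $G''$ is bounded and measurable near $0$. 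Hence $G'(y)=G'(a)+\int_a^y G''(s)\,\mathrm{d}s$ for every $y$, so $G'$ is locally Lipschitz — in particular absolutely continuous — with locally bounded Radon--Nikodym derivative $G''$, which is the assertion.
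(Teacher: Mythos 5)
Your proof is correct and follows essentially the same route as the paper's: establish $G'(y)\to g''(0)/2$ from the first two expansions in \eqref{exp0}, then show the explicit formula for $G''(y)$ off the origin is $O(1)$ via the cancellation of the $y$ and $y^2$ terms in the numerator. You are somewhat more explicit than the paper about the final upgrade from ``$G'$ continuous with bounded $G''$ near $0$'' to absolute continuity of $G'$, but this is a detail the paper simply asserts, not a different method.
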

\begin{proof} By \eqref{exp0},
 \begin{equation*}
  \frac{G(y) - G(0)}{y} = \frac{1}{2}g^{\prime\prime}(0) + O(y)
 \end{equation*}
and
\begin{equation*}
 G^\prime(y) = \frac{1}{y}\left(g^\prime(y)- \frac{g(y)}{y}\right)
= \frac{1}{2}g^{\prime\prime}(0) + O(y)
\end{equation*}
as $y\to 0$. In particular, $G$ is $C^1$. Further,
\begin{equation*}
 G^{\prime\prime}(y) = 
\frac{1}{y}\left(g^{\prime\prime}(y)- 
\frac{2}{y}\left(g^\prime(y)- \frac{g(y)}{y}\right)
\right) = O(1)
\end{equation*}
as $y\to 0$ 
by \eqref{exp0}, which means that $G^\prime$ is absolutely continuous
 and
$G^{\prime\prime}$ is locally bounded.
\end{proof}
\begin{lemma}\label{exist}
 The ordinary differential equation \eqref{ode} with
\eqref{gp}
has a unique $C^1$
 solution. The solution $g$ is $C^2$ with \eqref{exp0}
and satisfies $g(y)/y > 0$ for all
 $y \neq 0$.
\end{lemma}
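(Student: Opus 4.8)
The plan is to split the analysis at $y=0$, where the quotient $yg'(y)/g(y)$ is an indeterminate $0/0$ and \eqref{ode} is singular, from the range $y\neq 0$, where it is a regular first-order ODE. Write $p(y):=1+2\rho y/(2H+1)+y^2/(2H+1)^2$, which is strictly positive since $|\rho|<1$, so that \eqref{ode} reads $p(y)g'(y)^2-(1-2H)\,yg'(y)/g(y)-2H=0$, a quadratic in $g'(y)$. Since $H>0$, this forces $g'(y)\neq 0$ along any solution; combined with \eqref{gp} and the $y\to 0$ limit of \eqref{ode} — which, using $g(0)=0$ and $g'(0)\neq 0$ so that $yg'(y)/g(y)\to 1$, forces $g'(0)=1$ — it gives $g'>0$ everywhere, hence $g$ strictly increasing, vanishing only at $0$, with $g(y)/y>0$ for all $y\neq 0$. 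Picking the positive root of the quadratic rewrites \eqref{ode} on any interval avoiding $0$ as $g'(y)=\Phi(y,g(y))$ with
\[
  \Phi(y,g):=\frac{(1-2H)y/g+\sqrt{(1-2H)^2 y^2/g^2+8Hp(y)}}{2p(y)}
\]
smooth on $\{g\neq 0\}$; standard ODE theory then yields existence, uniqueness and $C^\infty$-regularity there, and a crude a priori bound (from $g'>0$ and $yg'/g>0$ the right side of \eqref{ode} is at most linear in $g'$, so $g'=O(1/|y|)$ and $|g|=O(\log|y|)$ for large $|y|$) rules out finite blow-up, extending the solution to all of $\mathbb{R}$.

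The core of the proof is to resolve the singularity at $0$; knowing $g'(0)=1$ is not enough to launch an ODE there. I would substitute $g(y)=y\bigl(1+y\psi(y)\bigr)$: with $G:=g/y=1+y\psi$ one has $g'=G+yG'$ and $yg'/g=1+yG'/G$, so \eqref{ode} becomes $p\,(G+yG')^2=1+(1-2H)\,yG'/G$, a quadratic in $w:=yG'$ whose constant term $pG^2-1$ vanishes at $(y,G)=(0,1)$; by the implicit function theorem the root through $w=0$ (the one consistent with $g'(0)=1$) is $w=\omega(y,G)$ with $\omega$ smooth and $\omega(0,1)=0$. Since $\omega(y,1+y\psi)=y\,\tilde\omega(y,\psi)$ for a smooth $\tilde\omega$, and $G'=\psi+y\psi'$, this collapses to the desingularized equation
\[
  y\,\psi'(y)=\Gamma(y,\psi(y)),\qquad \Gamma(y,\psi):=\tilde\omega(y,\psi)-\psi,
\]
with $\Gamma$ smooth (in fact real-analytic, as the coefficients of \eqref{ode} are constants) near $(0,\psi_0)$, affine in $\psi$ at $y=0$, $\Gamma(0,\psi_0)=0$ for $\psi_0:=-2\rho/\bigl((2H+1)(2H+3)\bigr)$, and — the crucial point — $\lambda:=\partial_\psi\Gamma(0,\psi_0)=-(3+2H)/(1+2H)<0$.

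The hard part is this last equation, which is still singular at $0$ so that Picard--Lindel\"of does not apply; but the negativity of $\lambda$ places us in the stable regime of a regular singular point (the linearisation $yu'=\lambda u$ has solutions $cy^{\lambda}$, only $c=0$ bounded when $\lambda<0$), so among $C^1$ — indeed bounded continuous — functions there is a unique $\psi$ with $\psi(0)=\psi_0$, as smooth as $\Gamma$. I would prove this by a contraction argument: with $\psi=\psi_0+u$ the equation is equivalent to the fixed-point equation
\[
  u(y)=\int_0^1 t^{-\lambda-1}\bigl(\Gamma(ty,\psi_0+u(ty))-\lambda\,u(ty)\bigr)\,\mathrm{d}t,
\]
whose integrand is $O(|y|+u^2)$ because $\Gamma(y,\psi_0+u)-\lambda u=\Gamma_y(0,\psi_0)\,y+O(y^2+u^2)$, and which is a contraction on a small ball of $C([-\varepsilon,\varepsilon])$; a bootstrap — or uniqueness of the formal power series, whose coefficients are determined since $\lambda\notin\mathbb{Z}_{>0}$, plus a majorant estimate — upgrades the regularity. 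Then $g(y):=y\bigl(1+y\psi(y)\bigr)$ is a $C^\infty$ solution of \eqref{ode} near $0$ with $g(0)=0$, $g'(0)=1$, $g''(0)=2\psi_0=-4\rho/\bigl((2H+1)(2H+3)\bigr)$, so \eqref{exp0} holds; on $\{y\neq 0\}$ it agrees with the unique solution of $g'=\Phi(y,g)$, hence it is the unique $C^1$ solution of \eqref{ode}--\eqref{gp} on $\mathbb{R}$, it is $C^2$, and $g(y)/y>0$ for $y\neq 0$ — the asserted conclusions (after which Lemma~\ref{ratio} gives that $g$ is $C^{2+}$).
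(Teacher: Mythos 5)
Your argument is a genuinely different route from the paper's. The paper (Lemma~\ref{lemA3}) solves $g=\Phi[g]$ by a monotone (alternating) Picard iteration, exploiting that $\varphi(y,z)$ in \eqref{varphi} is increasing in $z$, and then glues a contraction in the weighted norm $\sup_{|y|\le\delta}|f(y)/y|$ near $0$ to a Gronwall estimate away from $0$; you instead desingularize the origin by the substitution $g(y)=y(1+y\psi(y))$ and treat $y\psi'=\Gamma(y,\psi)$ as a Briot--Bouquet-type equation with negative exponent. Your computations check out: I verified $\psi_0=-2\rho/((2H+1)(2H+3))$, $\lambda=-(3+2H)/(1+2H)<0$, and $g''(0)=2\psi_0$, which agrees with the paper's $\beta$ in Step~3 of its proof; the selection of the positive root, the identification $g'(0)=1$, and the no-blow-up extension are also fine. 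Your method arguably gives more (analyticity near $0$) and is a clean way to obtain \eqref{exp0}.

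There is, however, a gap in the uniqueness claim. Your contraction yields uniqueness of $\psi$ among \emph{bounded} (or continuous-at-$0$) functions, i.e.\ uniqueness of $g$ within the class $g(y)=y+O(y^2)$. But the lemma asserts uniqueness among all $C^1$ solutions, and for such a $g$ one only knows a priori that $g(y)-y=o(y)$, hence $\psi(y)=(g(y)-y)/y^2=o(1/y)$, which need not be bounded. The phrase ``among $C^1$ --- indeed bounded continuous --- functions there is a unique $\psi$'' silently identifies ``$g$ is $C^1$'' with ``$\psi$ is bounded''; that implication is exactly what must be proved, and the linearized picture ($cy^{\lambda}$ excluded by continuity of $g$ since $2+\lambda=(2H-1)/(1+2H)\le 0$) is suggestive but not a proof for the nonlinear equation. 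Two standard repairs: (i) a self-improving estimate --- writing $g(y)/y-1=\frac1y\int_0^y\bigl(\varphi(u,u/g(u))-\varphi(0,1)\bigr)\mathrm{d}u$ and using \eqref{shr}, i.e.\ $\partial_z\varphi(0,1)=\frac{1-2H}{1+2H}<1$, one gets $M(a):=\sup_{0<|y|\le a}|g(y)/y-1|\le Ca+(\partial_z\varphi(0,1)+o(1))M(a)$, hence $M(a)=O(a)$ and $\psi$ is bounded; this is in substance the paper's $\|\cdot\|_\delta$-contraction; or (ii) the paper's Step~2, which avoids the issue entirely: if two solutions satisfy $g>\hat g>0$ on $(0,y)$, monotonicity of $\varphi$ in $z$ gives $g(y)-\hat g(y)=\int_0^y\bigl(\varphi(u,u/g(u))-\varphi(u,u/\hat g(u))\bigr)\mathrm{d}u\le 0$, a contradiction requiring no rate of convergence at the origin. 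With either patch inserted, your proof is complete.
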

\begin{proof}
Regarding \eqref{ode} as a quadratic equation in $g^\prime(y)$, we
get
\begin{equation*}
 g^\prime(y) = \frac{(1-2H)f(y) \pm \sqrt{(1-2H)^2f(y)^2 + 8Hq(y)}}{2q(y)},
\end{equation*}
where 
\begin{equation*}
f(y) = \frac{y}{g(y)}, \ \ 
q(y) =  1 + 2\rho \frac{y}{2H+1} + \frac{y^2}{(2H+1)^2}.
\end{equation*}
Note that $q(y) \geq 1-\rho^2 > 0$.
By \eqref{gp}, we should solve
\begin{equation*}
 g^\prime(y) = \varphi\left(
y, \frac{y}{g(y)}
\right),
\end{equation*}
where
\begin{equation}\label{varphi}
\varphi(y,z) = 
\frac{(1-2H)z + \sqrt{(1-2H)^2z^2 + 8Hq(y) }}{2q(y)}.
\end{equation}
Letting $y\to 0$, we should have
\begin{equation*}
 g^\prime(0) = \varphi\left(0,\frac{1}{g^\prime(0)}\right)
\end{equation*}
of which the unique solution is $g^\prime(0) = 1$.
The result then follows from Lemma~\ref{lemA3} below.
\end{proof}

\begin{lemma}\label{lemA3}
Let $\varphi : \mathbb{R}\times [0,\infty) \to (0,\infty)$ be a $C^2$
 function with
\begin{equation} \label{inc}
 \inf_{(y,z) \in \mathbb{R}\times [0,\infty)}  \frac{\partial
  \varphi}{\partial z}(y,z) \geq 0,
\end{equation}
\begin{equation} \label{Lip}
 \sup_{(y,z) \in \mathbb{R}\times [0,\infty)}  \frac{\partial
  \varphi}{\partial z}(y,z) < \infty,
\end{equation}
\begin{equation} \label{shr}
 \frac{\partial \varphi}{\partial z}(0,1) <  1, 
\end{equation}
\begin{equation}\label{zo}
 \varphi(0,1) = 1,
\end{equation}
and such that the equation
\begin{equation}\label{aleq}
 \alpha = \varphi\left(0,\frac{1}{\varphi(0,\frac{1}{\alpha})}\right),\
  \ \alpha > 0
\end{equation}
has the unique solution $\alpha = 1$.
Then, there exists a unique solution of the differential equation
\begin{equation*}
 g^\prime(y) = \varphi\left(
y, \frac{y}{g(y)}
\right), \ \ g(0) = 0
\end{equation*}
The solution $g$ is $C^2$ with \eqref{exp0} 
and satisfies $g(y)/y > 0$ for $y \neq 0$.
\end{lemma}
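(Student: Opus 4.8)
The plan is to isolate the only real difficulty — the singularity of the right–hand side at $y=0$ — and treat the rest by classical ODE theory. Throughout, write $T(\alpha)=\varphi(0,1/\alpha)$ for $\alpha>0$. By \eqref{inc} the map $T$ is continuous and nonincreasing; by \eqref{zo}, $T(1)=1$; hence $T\circ T$ is continuous and nondecreasing with $(T\circ T)(1)=1$, and by \eqref{shr} its derivative at $1$ equals $\varphi_z(0,1)^2<1$. Since \eqref{aleq} says $1$ is the only solution of $\alpha=(T\circ T)(\alpha)$, the continuous function $\alpha\mapsto(T\circ T)(\alpha)-\alpha$ has no zero on $(0,1)\cup(1,\infty)$; being negative just to the right of $1$ (by the derivative value) it is negative throughout $(1,\infty)$, and likewise positive throughout $(0,1)$. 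I will refer to this dichotomy below; it drives the uniqueness part.

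\emph{Local existence.} First I would construct a solution on a small interval $[-y_0,y_0]$ with the unknown $h(y)=g(y)/y$. Integrating the equation once turns it into $h(y)=\frac1y\int_0^y\varphi(s,1/h(s))\,\mathrm{d}s=:\Gamma h(y)$, and I would apply Banach's fixed point theorem on the closed ball $\{h\in C([-y_0,y_0]):h(0)=1,\ \|h-1\|_\infty\le M\}$. The operator $\Gamma$ is an average, so constants are not amplified; by \eqref{shr} and continuity of $\varphi_z$ there is $\lambda<1$ with $\varphi_z\le\lambda$ near $(0,1)$, so on the ball (where $1/h$ stays near $1$) one gets $|\Gamma h_1(y)-\Gamma h_2(y)|\le\lambda(1-M)^{-2}\|h_1-h_2\|_\infty$; shrinking $M$ makes this a contraction, and shrinking $y_0$ (so that $\sup_{|s|\le y_0}|\varphi(s,1)-1|$ is small) makes the ball invariant. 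The fixed point $h^*$ satisfies $yh^*(y)=\int_0^y\varphi(s,1/h^*(s))\,\mathrm{d}s$, so $g:=yh^*$ is $C^1$ with $g'(y)=\varphi(y,y/g(y))$ for $y\ne0$, $g'(0)=\varphi(0,1)=1$, and $g(y)/y=h^*(y)>0$ near $0$.

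\emph{Uniqueness.} Let $\tilde g$ be any solution with $\tilde g(0)=0$. Since $\varphi>0$ is continuous and (the ODE forcing $\tilde g\ne0$ off the origin) $\tilde g$ is increasing through $0$, we have $\tilde g'(s)\ge\varphi(s,0)\ge c>0$ for small $s>0$, hence $\tilde g(s)\ge cs$, hence $s/\tilde g(s)\le1/c$ and $\tilde g'(s)\le M'$ there; so $\beta:=\limsup_{y\downarrow0}\tilde g(y)/y$ and $\gamma:=\liminf_{y\downarrow0}\tilde g(y)/y$ are finite and positive. Fix $\varepsilon>0$: for $s$ small, $\tilde g(s)/s\le\beta+\varepsilon$ forces $s/\tilde g(s)\ge1/(\beta+\varepsilon)$, so by monotonicity of $\varphi$ in $z$ (from \eqref{inc}) and continuity $\tilde g'(s)\ge\varphi(0,1/(\beta+\varepsilon))-\varepsilon$; integrating and letting $\varepsilon\downarrow0$ gives $\gamma\ge T(\beta)$, and symmetrically $\beta\le T(\gamma)$. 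Applying the nonincreasing $T$ once more, $\beta\le T(\gamma)\le(T\circ T)(\beta)$ and $\gamma\ge T(\beta)\ge(T\circ T)(\gamma)$. By the dichotomy, $\beta\le(T\circ T)(\beta)$ forces $\beta\le1$ and $\gamma\ge(T\circ T)(\gamma)$ forces $\gamma\ge1$; with $\gamma\le\beta$ this yields $\beta=\gamma=1$. So $\tilde g'(0)$ exists and equals $1$ (and the same as $y\uparrow0$); then on a small interval $\tilde g/y$ lies in the ball of the previous paragraph and is a fixed point of $\Gamma$, hence coincides with $h^*$, and away from $0$ the equation is regular, so $\tilde g\equiv g$ by Picard--Lindel\"of. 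This step is where the two-fold composition in \eqref{aleq} is indispensable: a fixed-point condition on $T$ alone would leave room for $\tilde g(y)/y$ to oscillate between the two values of a $2$-cycle of $T$.

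\emph{Regularity and global extension.} A routine bootstrap using $\varphi\in C^2$ shows $g$ is $C^2$ near $0$ with $g''(0)=2\varphi_y(0,1)/(2+\varphi_z(0,1))$ (well defined by \eqref{inc}), and Taylor's theorem then gives \eqref{exp0}. For $|y|\ge y_0$ the equation $g'(y)=\varphi(y,y/g(y))$ is regular, since $g'=\varphi>0$ keeps $g$ monotone and hence bounded away from $0$ there, and $\varphi$ is locally Lipschitz; so $g$ continues uniquely, and the bound $\varphi(y,z)\le\varphi(y,0)+(\sup\varphi_z)\,z$ from \eqref{Lip}, with $z=y/g(y)$ at most linear in $y$ and $\varphi(\cdot,0)$ locally bounded, precludes finite-time blow-up, so $g$ is defined on all of $\mathbb{R}$ with $g(y)/y>0$ for $y\ne0$. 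The main obstacle is the pair of arguments at $y=0$: choosing the function space so that the singular operator $\Gamma$ is genuinely contractive (where \eqref{shr} enters), and the $\limsup/\liminf$ squeeze for uniqueness (where \eqref{aleq} enters).
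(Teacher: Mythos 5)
Your existence and uniqueness arguments are sound and take a genuinely different route from the paper: you build the solution near the origin by a Banach contraction for $h=g/y$ (the paper instead runs the monotone iteration $g_{n+1}=\Phi[g_n]$, extracts even/odd limits $g_e,g_o$, and only then uses a local contraction plus Gronwall to identify them), and you prove uniqueness at the origin by the $\limsup/\liminf$ squeeze $\gamma\ge T(\beta)$, $\beta\le T(\gamma)$ combined with the sign dichotomy for $(T\circ T)(\alpha)-\alpha$; this is where \eqref{aleq} and \eqref{shr} enter in the paper as well, through the convergence of $\alpha_n$ and the comparison arguments of Steps 1--2. Both routes work, and the quantitative checks you record (the contraction constant $\lambda(1-M)^{-2}$, the bounds $c\le\tilde g(s)/s\le M'$, the linear growth of $y/g(y)$ precluding blow-up) are correct; your version has the merit of confining the use of \eqref{aleq} entirely to the uniqueness step.

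The genuine gap is the sentence ``a routine bootstrap using $\varphi\in C^2$ shows $g$ is $C^2$ near $0$.'' It is not routine, and the obvious bootstrap fails: differentiating the equation gives $g''(y)=\varphi_y(y,y/g(y))+\varphi_z(y,y/g(y))\,\bigl(g(y)-yg'(y)\bigr)/g(y)^2$, and with only the first-order information $g(y)=y+o(y)$, $g'(y)=1+o(1)$ supplied by the $C^1$ theory, the last factor is $o(y)/\bigl(y^2(1+o(1))\bigr)=o(1/y)$ --- an indeterminate form that need not converge as $y\to0$. To resolve it you must first establish the second-order expansion $g(y)/y=1+\beta y/2+O(y^2)$, i.e.\ essentially \eqref{exp0} itself, so the bootstrap as stated is circular. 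This is exactly what Step 3 of the paper's proof supplies: it compares $g$ with the explicit $\hat g(y)=y+\beta y^2/2$, proves $\sup_{0<|y|\le a}|g(y)/y-\hat g(y)/y|=O(a^2)$ by reusing the contraction estimate, and only then reads off $g''(0)=\beta$ from the differentiated equation. Your framework can absorb this with modest extra work (show $\|\Gamma\hat h-\hat h\|=O(a^2)$ on $[-a,a]$ for $\hat h(y)=1+\beta y/2$ and invoke the contraction), but as written the $C^2$ claim and hence \eqref{exp0} are unproved; your stated value $g''(0)=2\varphi_y(0,1)/(2+\varphi_z(0,1))$ does agree with the paper's $\beta$.
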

\begin{proof}
{\bf Step 1:} Here we show that 
there exists a $C^1$ function $g$ which satisfies
the integral equation $g = \Phi[g]$, where
\begin{equation*}
 \Phi[g](y) = \int_0^y \varphi\left(u,\frac{u}{g(u)}\right)\mathrm{d}u.
\end{equation*}
Since $\varphi(y,z)$ is increasing in $z$, we have
$\Phi[g_0](y)/y \geq \Phi[g_1](y)/y$ for all $y\neq 0$ 
if $0 < g_0(y)/y \leq g_1(y)/y$ for all $y \neq 0$.
Let 
\begin{equation*}
 g_0(y) = \int_0^y\varphi(u,0) \mathrm{d}u.
\end{equation*}
 and define  $g_{n+1} = \Phi[g_n]$ for $n \geq 0$.
By \eqref{inc} we have $0 < g_0(y)/y \leq g_1(y)/y$
for all $y \neq 0$. Then by the above mentioned monotonicity of $\Phi$, 
we have
$g_1(y)/y \geq g_2(y)/y$ for all $y \neq 0$.
Again by \eqref{inc} we have
$g_2(y)/y \geq g_0(y)/y$ for all $y \neq 0$.
This inequality then implies 
$g_3(y)/y \leq g_1(y)/y$ for all $y \neq 0$.
By induction we obtain
\begin{equation}\label{ind}
%0 <  \frac{y}{g_n(y)} \leq \frac{y}{g_0(y)}
%, \ \ g_n^\prime(y) \geq \sqrt{\frac{2H}{q(y)}}
0 < \frac{g_0(y)}{y} \leq \frac{g_{2n}(y)}{y} \leq \frac{g_{2n+2}(y)}{y}
\leq \frac{g_{2n+3}(y)}{y} \leq \frac{g_{2n+1}(y)}{y}
\end{equation}
for all $n \geq 0$ and for all $y \neq 0$. 
By the monotonicity of the sequences there exist $g_e(y)$ and
 $g_o(y)$ such that
\begin{equation}\label{sand}
\frac{g_{2n}(y)}{y} \leq
\lim_{n\to \infty} \frac{g_{2n}(y)}{y} =  \frac{g_e(y)}{y} \leq
 \frac{g_o(y)}{y} =  \lim_{n\to \infty} \frac{g_{2n + 1}(y)}{y} 
\leq  \frac{g_{2n + 1}(y)}{y} 
\end{equation}
for each $y \neq 0$. By the dominated convergence theorem, 
$(g_e,g_o)$ is a solution of the coupled integral equation
\begin{equation*}
  g_e = \Phi[g_o],   \ \ g_o = \Phi[g_e].
\end{equation*}
We are going to show $g_e = g_o$. Letting $y \to 0$ in \eqref{sand},
we have
\begin{equation}\label{sand2}
 \alpha_{2n} \leq \liminf_{y \to 0} \frac{g_e(y)}{y}  \leq
\limsup_{y \to 0} \frac{g_o(y)}{y}  \leq
\alpha_{2n+1},
\end{equation}
where,  by L'H\^opital's rule,
\begin{equation*}
 \alpha_n = \lim_{y\to0}
\frac{g_n(y)}{y} = \lim_{y \to 0}\varphi\left(
y,\frac{y}{g_{n-1}(y)}
\right) = \varphi\left(0,\frac{1}{\alpha_{n-1}}\right)
\end{equation*}
for $n \geq 1$ and $\alpha_0 = \varphi(0,0) > 0$.
By \eqref{ind}, both $\{\alpha_{2n}\}$ and
 $\{\alpha_{2n+1}\}$ are bounded 
monotone sequences and so convergent.
The limit $\alpha$ of each sequence has to be a solution of \eqref{aleq}
and so,  $\alpha = 1$ by the assumption.
We then conclude that $\alpha_n$ itself converges to $1$.
Now taking $n\to \infty$ in \eqref{sand2}, we have
\begin{equation}\label{lim1}
 \lim_{y \to 0} \frac{g_e(y)}{y}  
= \lim_{y \to 0} \frac{g_o(y)}{y}  
= 1.
\end{equation}
Therefore for any $\epsilon > 0$, there exists $\delta(\epsilon) > 0$ such that if $|y|\leq \delta(\epsilon)$,
\begin{equation}\label{delb}
1 + \epsilon \geq \frac{y}{g_o(y)}   \geq  \frac{y}{g_e(y)} \geq 1-\epsilon.
\end{equation}
On the other hand,
by \eqref{shr}, there exists $\delta_1 > 0$ such that
\begin{equation}\label{delb2}
 \sup_{|y|\leq \delta_1, |z-1| \leq \delta_1} \left|
 \frac{\partial \varphi}{\partial z}(y,z)\right| < 1- \frac{1}{2}\left(1-\frac{\partial
 \varphi }{\partial z}(0,1)\right).
\end{equation}
Let $\delta = \min\{\delta_1,\delta(\delta_1),\delta(\epsilon)\}$ and
\begin{equation*}
 \|f\|_\delta = \sup_{|y|\leq \delta} \left|\frac{f(y)}{y}\right|
\end{equation*}
for a function $f$. Then, using \eqref{delb} and \eqref{delb2},
\begin{equation*}
\begin{split}
\left\|
g_o-g_e
\right\|_\delta
&\leq \sup_{|y|\leq \delta}\frac{1}{y}\int_0^y
\left|
\varphi\left(u,\frac{u}{g_e(u)}\right) -
\varphi\left(u,\frac{u}{g_o(u)}\right)
\right|\mathrm{d}u
 \\
&\leq  \sup_{|y|\leq \delta_1, |z-1| \leq \delta_1} \left|
 \frac{\partial \varphi}{\partial z}(y,z)\right|
\sup_{|y|\leq \delta} \frac{1}{y}\int_0^y 
\frac{u^2}{g_e(u)g_o(u)}
\left|
\frac{g_e(u)}{u} - \frac{g_o(u)}{u}
\right|\mathrm{d}u \\
& \leq (1+\epsilon)^2
\left(1- \frac{1}{2}\left(1-\frac{\partial
 \varphi }{\partial z}(0,1)\right)\right)
  \|g_o-g_e\|_\delta.
\end{split}
\end{equation*}
We can take such $\epsilon > 0$  that
\begin{equation*}
 (1+\epsilon)^2
\left(1- \frac{1}{2}\left(1-\frac{\partial
 \varphi }{\partial z}(0,1)\right)\right) < 1
\end{equation*}
to conclude that $g_o(y) = g_e(y)$ for $|y|\leq
 \delta$.
For $y > \delta$, by the Lipschitz continuity of $\varphi$ in $z$,
 \eqref{sand} and \eqref{sand2}, we have
\begin{equation*}
| g_o(y) - g_e(y)| \leq L \int_{\delta}^y
\frac{y}{g_0(y)^2}|g_o(u)-g_e(u)|\mathrm{d}u
\end{equation*}
for a constant $L$. Then, by Gronwall's lemma we have
$g_o(y) = g_e(y)$ for $y \geq \delta$.
Similarly we obtain $g_o(y) = g_e(y)$ for $y \leq -\delta$.
Thus $g_o = g_e$ is a solution of $g = \Phi[g]$.
By \eqref{lim1}, the solution $g$ satisfies 
\begin{equation*}
 g^\prime(0) = 1 = \varphi(0,1) = \lim_{y \to 0} \varphi\left(
y,\frac{y}{g(y)}
\right) = \lim_{y \to 0} g^\prime(y)
\end{equation*}
and so, is a $C^1$ function.

{\bf Step 2:} Here we show that a solution of $g = \Phi[g]$ is unique.
Let $g$ and $\hat{g}$ be two solutions.
By \eqref{inc}, we have
\begin{equation*}
0 <  \frac{y}{g(y)} \leq \frac{y}{g_0(y)}, \ \ 
0 <  \frac{y}{\hat{g}(y)} \leq \frac{y}{g_0(y)}.
\end{equation*}
Therefore by the Lipschitz continuity of $\varphi$ in $z$,
if there exists $y_0 > 0$ such that $g(y_0) = \hat{g}(y_0)$, then
\begin{equation*}
| g(y) - \hat{g}(y)| \leq L \int_{y_0}^y
\frac{y}{g_0(y)^2}|g(u)-\hat{g}(u)|\mathrm{d}u
\end{equation*}
for $y \geq y_0$. We have $g(y) = \hat{g}(y)$ for $y\geq y_0$
by Gronwall's lemma.
Now, suppose that there exists $y >0$ such that $g(y) > \hat{g}(y)$.
Then, from the above observation, we have
 $g(u) > \hat{g}(u)$ for all $u\in
 (0,y)$.
However the monotonicity of $\varphi$ in $z$ implies that
\begin{equation*}
 0 < g(y) - \hat{g}(y)
= \int_0^y \varphi\left(u,\frac{u}{g(u)}\right) -
\varphi\left(u,\frac{u}{\hat{g}(u)}\right) \mathrm{d}u \leq 0
\end{equation*}
that is a contradiction. 
If there exists $y < 0$ such that $g(y) > \hat{g}(y)$,
then, again by a similar argument we conclude that
$g(u) > \hat{g}(u)$ for all $u \in (y,0)$.
This results in a contradiction as
\begin{equation*}
 0 < g(y)-\hat{g}(y) = -\int_y^0
\varphi\left(u,\frac{u}{g(u)}\right) -
\varphi\left(u,\frac{u}{\hat{g}(u)}\right)\mathrm{d}u < 0.
\end{equation*}
Therefore we have $g = \hat{g}$.

{\bf Step 3:} It remains to show that the solution $g$ is $C^2$ with \eqref{exp0}. Let
\begin{equation*}
 \hat{g}(y) = y + \beta \frac{y^2}{2},
\end{equation*}
where
\begin{equation*}
 \beta = \frac{\frac{\partial \varphi}{\partial y}(0,1)}{1 +\frac{1}{2} \frac{\partial \varphi}{\partial z}(0,1)}.
\end{equation*}
We have
\begin{equation}\label{expan}
 \frac{g(y)}{y} = \frac{\hat{g}(y)}{y} + O(y^2)
\end{equation}
as $y \to 0$. Indeed, since
\begin{equation}\label{expan2}
 \begin{split}
  \varphi\left(
y, \frac{y}{\hat{g}(y)}
\right) &= \varphi(0,1) + \frac{\partial \varphi}{\partial y}(0,1)y
+ \frac{\partial \varphi}{\partial z}(0,1)
\left(\frac{1}{1 + \beta y/2} -1 \right) + O(y^2) \\
&= 1 + \left(
\frac{\partial \varphi}{\partial y}(0,1)
-\frac{\beta}{2}\frac{\partial \varphi}{\partial z}(0,1)
\right)y + O(y^2) \\
&= 1 + \beta y + O(y^2) \\
& = \hat{g}^\prime(y) + O(y^2),
 \end{split}
\end{equation}
in light of \eqref{delb2}, there exists $L \in (0,1)$ such that
\begin{equation*}
 \begin{split}
  \sup_{0 < |y| \leq a} \left|
\frac{g(y)}{y} - \frac{\hat{g}(y)}{y}
\right| & \leq   \sup_{0 < |y| \leq a} \left|
\frac{1}{y}\int_0^y \varphi\left(
u,\frac{u}{g(u)}
\right) - \varphi\left(u,\frac{u}{\hat{g}(u)}\right) \mathrm{d}u \right|
  + O(a^2)\\
& \leq L
  \sup_{0 < |y| \leq a} \left|
\frac{g(y)}{y} - \frac{\hat{g}(y)}{y}
\right| + O(a^2).
 \end{split}
\end{equation*}
This implies
\begin{equation*}
  \sup_{0 < |y| \leq a} \left|
\frac{g(y)}{y} - \frac{\hat{g}(y)}{y}
\right| = O(a^2)
\end{equation*}
and in particular, \eqref{expan}.
Now, from \eqref{expan}, we have
\begin{equation}\label{gpp}
\begin{split}
 \frac{1}{y}\left(\frac{g(y)}{y} -1\right) & = 
 \frac{1}{y}\left(\frac{\hat{g}(y)}{y} -1\right) + O(y) \\
& = \frac{\beta}{2} + O(y) 
\end{split}
\end{equation}
as $y \to 0$. Further, by \eqref{expan} and \eqref{expan2},
\begin{equation}\label{gpp2}
\begin{split}
%\lim_{y\to 0} \frac{g^\prime(y) -1}{y} &=  
%\lim_{y\to 0}
% \frac{1}{y}\left(\varphi\left(y,\frac{y}{g(y)}\right)-1\right)\\
% & =  \frac{\partial \varphi}{\partial y}(0,1)
%+ 
% \frac{\partial \varphi}{\partial z} (0,1)
%\lim_{y\to 0} \frac{1}{y}\left(\frac{y}{g(y)}-1\right) \\
%& = \frac{\partial \varphi}{\partial y}(0,1)
%-\frac{\beta}{2}
% \frac{\partial \varphi}{\partial z} (0,1)\\
%&= \beta.
\frac{g^\prime(y) -1}{y} = \frac{\hat{g}^\prime(y) -1}{y} + O(y)
= \beta + O(y)
\end{split}
\end{equation}
as $y\to 0$.
On the other hand, 
\begin{equation}\label{gpp3}
 \begin{split}
  g^{\prime\prime}(y) &= 
 \frac{\partial \varphi}{\partial y}\left(y,\frac{y}{g(y)}\right) +
 \frac{\partial \varphi}{\partial z}\left(y,\frac{y}{g(y)}\right)
\left(\frac{y}{g(y)}\right)^\prime \\
&=
\frac{\partial \varphi}{\partial y}\left(y,\frac{y}{g(y)}\right) +
 \frac{\partial \varphi}{\partial z}\left(y,\frac{y}{g(y)}\right)
\frac{1}{g(y)}\left(1 - \frac{y}{g(y)}- 
\frac{y}{g(y)}(g^\prime(y)-1)\right)
\\
& =
\frac{\partial \varphi}{\partial y}\left(0,1\right) +
 \frac{\partial \varphi}{\partial z}\left(0,1\right)
\left(\frac{\beta}{2}-\beta\right) + O(y) \\
&= \beta + O(y)
 \end{split}
\end{equation}
as $y\to 0$.
Therefore, $g$ is $C^2$ with \eqref{exp0}.
\end{proof}

\end{document}